\newtheorem{definition}{Definition}
\title{Protecting Split Learning by Potential Energy Loss}
\author{
    Fei Zheng$^1$
    \and
    Chaochao Chen$^1$
    \and
    Lingjuan Lyu$^2$
    \and
    Xinyi Fu$^3$
    \and\\
    Xing Fu$^3$
    \and
    Weiqiang Wang$^3$    
    \and
    Xiaolin Zheng$^1$\thanks{Corresponding author}
    \and
    Jianwei Yin$^1$
    \affiliations
    $^1$College of Computer Science and Technology, Zhejiang University \\
    $^2$Sony AI \\
    $^3$Ant Group
    \emails
    \{zfscgy2, zjuccc\}@zju.edu.cn, lingjuan.lv@sony.com, fxy122992@antgroup.com, \\
    \{fux008, wang.weiqiang\}@gmail.com,
    xlzheng@zju.edu.cn, zjuyjw@cs.zju.edu.cn
}
\begin{document}

\maketitle

\begin{abstract}
As a practical privacy-preserving learning method, split learning has drawn much attention in academia and industry.
However, its security is constantly being questioned since the intermediate results are shared during training and inference.
In this paper, we focus on the privacy leakage from the forward embeddings of split learning.
Specifically, since the forward embeddings contain too much information about the label, the attacker can either use a few labeled samples to fine-tune the top model or perform unsupervised attacks such as clustering to infer the true labels from the forward embeddings.
To prevent such kind of privacy leakage, we propose the potential energy loss to make the forward embeddings more `complicated', by pushing embeddings of the same class towards the decision boundary.
Therefore, it is hard for the attacker to learn from the forward embeddings.
Experiment results show that our method significantly lowers the performance of both fine-tuning attacks and clustering attacks.

\end{abstract}

\section{Introduction}
Split learning~\cite{vepakomma2018split_health,gupta2018distributed_learning} is a practical method for privacy-preserving machine learning on distributed data.
By splitting the model into multiple parts (sub-models), split learning allows different parties to keep their data locally and only to share the intermediate output of their sub-models.
Compared to cryptographic methods like \cite{mohassel2017secureml,rathee2020cryptflow2,huangzhicong2022cheetah}, split learning is much more efficient in computation and communication.
To date, split learning has been applied in multiple fields, e.g., graph learning~\cite{ccc2022vfgnn}, medical research~\cite{jeong2021split_medical}, and the internet of things~\cite{yusuke2020split_power}.
%

To perform split learning, the model should be split into multiple parts.
Without loss of generality, we suppose the model is split into two parts, i.e., the bottom model $M_b$ and the top model $M_t$, held by the feature owner (Alice) and the label owner (Bob), respectively.
As shown in \Cref{fig:overview} (top part), 
during the forward pass, Alice feeds the input feature $X$ to the bottom model to get the forward embedding $Z = M_b(X)$, then sends $Z$ to Bob.
Bob feeds $Z$ to the top model and gets the prediction $\hat Y = M_t(Z)$.
As for the backward pass, Bob computes the gradients on the loss $\partial L/\partial M_t$ and $\partial L/\partial Z$.
He uses the former one to update $M_t$ and sends the latter one to Alice.
Alice then computes the gradient with respect to $M_b$ and updates its parameters. 
From the above description, we can see that split learning is very straightforward with low computation and communication overhead compared with cryptographic methods.

\begin{figure}[t]
    \centering
    \includegraphics[width=\linewidth]{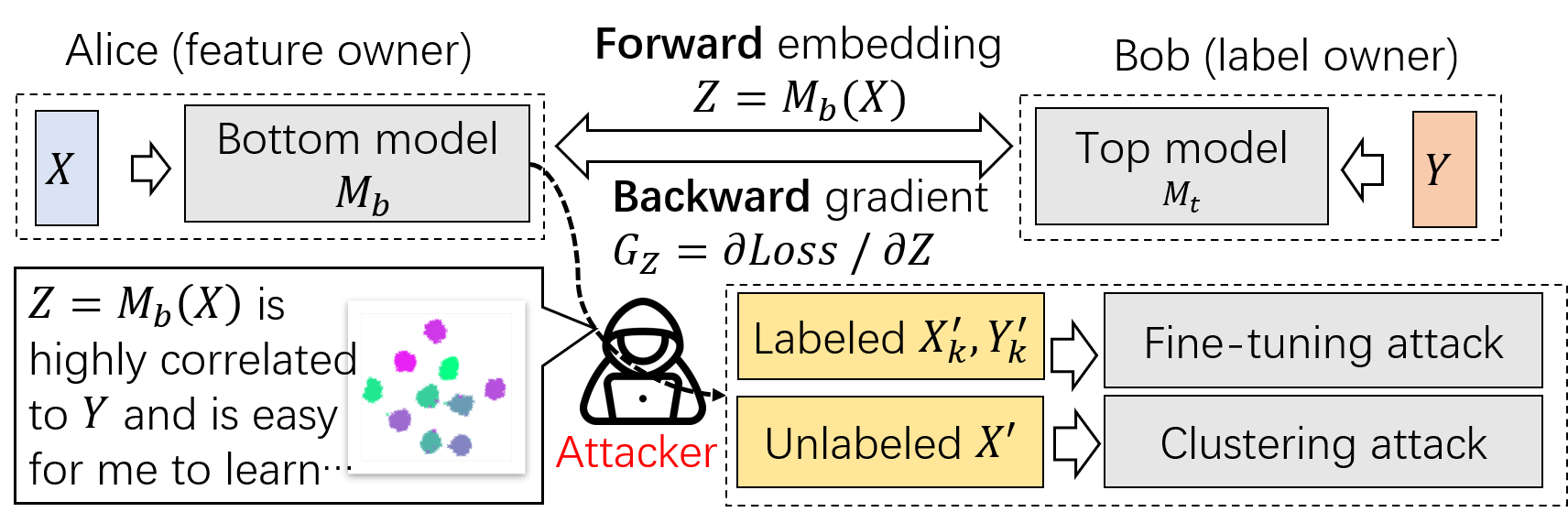}
    \caption{Attacks on the forward embeddings of split learning.}
    \label{fig:overview}
\end{figure}

However, the price for efficiency is privacy.
Many previous studies have investigated the privacy leakage of input features in split learning caused by the \textit{exchange of intermediate results} \cite{abuadbba2020split_cnn,pasquini2021inference_attack}.
Instead, we focus on the privacy leakage caused by the \textit{trained split model itself} in classification tasks, which has been demonstrated by \cite{fu2022vertical_federated_label,sunjiankang2022split_learning_label}.
Consider the two-party split learning scenario described in the previous paragraph.
As shown in \Cref{fig:overview} (bottom part),
if $M_b$ is trained well, it gains the ability to separate samples of different classes and cluster the samples of the same class, in other words, $Z$ becomes `meaningful' and highly correlated with the label.
Hence, Alice (or whoever obtains the bottom model) can 1) fine-tune $M_t$ from random initialization with a few labeled samples, or 2) simply perform clustering with enough unlabeled input samples.
In both cases, Alice can learn the complete classification model based on the forward embeddings.
Considering that the complete model and thelabel could be private assets,
such \textit{model completion attack} caused by the forward embedding poses a significant privacy threat to split learning.

The protection of input features in split learning is already studied.
For example, Vepakomma et al. \shortcite{Vepakomma2020nopeek} decorrelate $H_b$ and $X$ by adding distance correlation~\cite{szekeley2007dcor} loss.
This method is empirically successful because $Z$ does not need to contain the majority of $X$'s information---it only needs to contain the part most relevant to the label $Y$.
On the other side, protecting label information is more challenging.
Since $M_t(Z) = \hat Y$ is the model prediction, $Z$ contains all information about $\hat Y$.
While the training target is to make the prediction $\hat Y$ and the real label $Y$ as close as possible, it seems impossible to prevent the attacker from deriving $Y$ from $Z$.
%

%
To solve this problem, in this paper, we view it from a different perspective.
At first glance, the attacker's target seems exactly the same as Bob's training objective, i.e., to learn a mapping from $Z$ to $Y$.
But there is a crucial difference between them.
Alice and Bob perform the learning procedure on the entire training dataset, but on the contrary, the attacker can only access a small number of labeled samples by assumption.
The problem now turns into modifying $Z$'s distribution so that the attacker learns poorly with a few labeled samples while the benign actors learn well with sufficient labeled samples.
We observe that such distribution can be obtained if the embeddings of the same class are distributed near the boundary of the decision region.
However, during vanilla split learning, different-class embeddings tend to separate from each other while same-class embeddings are densely clustered, which is opposite to our purpose.
Our solution is inspired by a well-known physics phenomenon named electrostatic equilibrium, i.e., all the net charges distribute on the surface of the conductor.
Interestingly, this distribution happens to be the desired distribution for $Z$, if we view the forward embeddings as charges.
One reason for this phenomenon is that there is a repulsive force between every pair of \textit{like charges} (charges that have the same sign, e.g., positive charges).
Inspired by this, we propose the \textit{potential energy loss} on the forward embeddings.
The potential energy loss adds a repulsive force between each pair of same-class embeddings.
During training, same-class embeddings are pushed toward the boundary of the decision region, resulting in a large learning error for the attacker.
Therefore, the attacker cannot fine-tune the bottom model with a few labels or cluster unlabeled embeddings.

In summary, we make the following contributions:
\begin{itemize}
    \item We formalize the privacy leakage of the forward embeddings in terms of the learning error of the attacker, and demonstrate that making embeddings lie near the decision boundary enhances privacy.
    \item We propose the potential energy loss on the forward embeddings to push them to their decision boundary, in order to reduce the privacy leakage from the forward embeddings.
    \item We conduct extensive experiments on multiple datasets, showing that our method significantly reduces the attacker's learning accuracy of both fine-tuning attacks and clustering attacks, and performs better than the existing distance correlation approach.
\end{itemize}

\section{Related Work}
\noindent\textbf{Privacy Concerns of Split learning.}
Many studies have demonstrated the privacy concerns of split learning.
Most of them focus on the privacy of input features.
Abuadbba et al. \shortcite{abuadbba2020split_cnn} show that applying split learning to CNN models can be dangerous since the intermediate output is highly correlated to the input.
Luo et al. \shortcite{luo2021inference_attack} propose methods for feature inference attacks of split learning under certain conditions.
As for the privacy of the label data, the attack can be based on either the forward embeddings or the backward gradients.
For \textbf{forward embeddings}:
Sun et al. \shortcite{sunjiankang2022split_learning_label} find that bottom model output (forward embeddings) also leaks label data.
Fu et al. \shortcite{fu2022vertical_federated_label} point out that the attacker can easily fine-tune the top model with a few labeled samples.
For \textbf{backward gradients}:
Li et al. \shortcite{li2021split_learning_label} investigated the label leakage brought by the backward gradients, for example, the norm of different classes' gradients can be different.
Studies \cite{erdougan2022unsplit,sanjay2023exploit_split_learning} use a surrogate model and label to match the backward gradients and reveal the training labels.
While the above studies assume the attacker is \textit{passive}, i.e., he will not tamper with the training procedure but only exploits the information he received,
Pasquini et al. \shortcite{pasquini2021inference_attack}  propose an \textit{active} attack by making the bottom model invertible by modifying the training objective.

\vspace{5pt}
\noindent\textbf{Privacy Protection for Split Learning.}
\label{sec:RW-PPSL}
Aside from relatively expensive cryptographic-based methods such as privacy-preserving machine learning \cite{mohassel2017secureml,rathee2020cryptflow2,huangzhicong2022cheetah} 
or partial cryptographic split learning methods \cite{fufangcheng2022blindFL,chenchaochao2020codesign},
non-cryptographic methods mostly protect privacy by perturbing the forward embeddings or backward gradients.
\textbf{Perturbing forward embeddings}:
Vepakomma et al. \shortcite{Vepakomma2020nopeek} add a distance correlation~\cite{szekeley2007dcor} loss to decorrelate the input features, and similarly, Sun et al. \shortcite{sunjiankang2022split_learning_label} use distance correlation to protect the label.
Duan et al. \shortcite{duanlin2023privascissors} perturb the embedding by minimizing the mutual information with both label and features to protect both, however, it requires a customized model structure.
Chen et al. \shortcite{ccc2022ppcdr} protect the embedding in the recommendation model via differential privacy.
\textbf{Perturbing backward gradients}:
Li et al. \shortcite{li2021split_learning_label} protect the label during training by perturbing the backward gradients, although the forward embeddings could still leak the label information.

\vspace{5pt}
\noindent\textbf{Data-Dependent Generalization Error. }
Most studies on data-dependent generalization error are based on the Rademacher and Gaussian complexity
~\cite{koltchinskii2002empirical,kontorovich2014knn_generalization,leiyunwen2019data-dependent}, 
or the mutual information between the data and the algorithm output~\cite{negrea2019information_generalization,pensia2018generalization_iterative,russo2020overfit}.
%
Different from them, Jin et al. \shortcite{jinpengzhan2020generalization} derived generalization bounds directly from the data distribution, by proposing the so-called cover complexity, 
which is computed from the distances between same-class data points and different-class data points.
It is somewhat related to our work since our method makes the data distribution more `complicated' by pushing the data points to the decision boundary of their class.

\section{Problem: Model Completion Attack}
In this section, we demonstrate and formalize the privacy problem arising from the bottom model in split learning.

\subsection{Leakage from Forward Embeddings}
The hidden embeddings of neural networks are widely studied \cite{rauber2017visualize_hidden,pezzotti2018deepeyes,cantareira2020hidden}.
Through visualization and other techniques, those studies show that the neural network gradually learns to make hidden embeddings of different classes separate, and those of the same class clustered together.
Although this `separation ability' seems to be essential for neural networks and may be the reason why they perform well on various tasks, it also brings security hazards for split learning.
In split learning, the model is split into one (or multiple) bottom model(s) and top model(s), which are held by different parties.
Ideally speaking, any single party can not perform inference tasks, since he only gets a part of the complete model.
Only multiple parties work together, can they make use of the complete model and perform inference tasks.
In other words, the model should be trained and used in a `shared' manner.

However, given the fact that the hidden embeddings of the model are meaningful, the attacker who has the bottom model can either fine-tune the top model with a small number of labeled samples~\cite{fu2022vertical_federated_label}, or just perform clustering on the forward embeddings to infer the labels~\cite{sunjiankang2022split_learning_label}.
Thus, the privacy of split learning is violated.

\subsection{Threat Model}
We consider two threat models, i.e., fine-tuning attack with a few labeled samples and unsupervised clustering attack with massive unlabeled samples.

\vspace{5pt}\noindent\textbf{Fine-tuning Attack.}
We assume that the attacker has access to the trained bottom model $M_b$, along with a few labeled samples $X'_k, Y'_k$ which include $k$ samples for each class.
The attacker also knows the architecture of the top model $M_t$, and performs the model completion attack via training $M_t$ from a random initialization, given $X_k'$ and $Y_k'$, with pre-trained $M_b$ fixed.

\vspace{5pt}\noindent\textbf{Clustering Attack.}
We assume that the attacker has access to the trained bottom model $M_b$, along with massive unlabeled samples $X'$.
To infer the labels of $X'$, the attacker performs clustering algorithms on the forward embedding $Z' = M_b(X')$.

\vspace{5pt}
Notably, we assume \textbf{the attacks are conducted in the inference phase instead of the training phase}, as our method aims to reduce the privacy leakage from forward embeddings.
To prevent leakage from backward gradients during training, one can use non-sensitive data for training or adopt existing approaches such as cryptography-based secure computation described in \Cref{sec:RW-PPSL}.
It is worth noting that, some unsupervised/semi-supervised learning methods can train the complete model with good performance over unlabeled/partially labeled data~\cite{berthelot2019mixmatch,xuyi2021dp_ssl}.
However, they are not relevant to split learning since they do not require any knowledge about the trained bottom model or forward embedding.

\subsection{Problem Formulation}
In order to reduce the aforementioned privacy leakage while maintaining the model performance at the same time, our purpose is to train a split model $M = (M_b, M_t)$ such that the output of $M_b$ is hard for the attacker to learn, while the complete $M$ still maintains a high performance.

\begin{definition}[Bottom model advantage]
The bottom model advantage is the extra advantage obtained by the attacker when he has access to the trained bottom model.
Consider an attack algorithm $\mathcal A$ whose input is the data $D$ and (optionally) the bottom model $M_b$, the bottom model advantage is defined as follows:
\begin{equation}
\label{eq:bottom-adv}
\small
  \begin{split}
    & Adv(M_b;\mathcal A) = \mathbb E_D \left\{ R[\mathcal A(D;null)] - R[\mathcal A(D;M_b)] \right\},
\end{split}
\end{equation}
\end{definition}
where $R[\cdot]$ is an error metric for the attack outcome.
For example, in the fine-tuning attack, $D = (X_k', Y_k')$ is the leaked labeled samples, and $R[\cdot]$ is the error of the fine-tuned model on the test data; in the unsupervised attack, $D = X$ is the unlabeled data, and $R[\cdot]$ is the error of the clustering model on the test data.
We use $\mathcal A(D; null)$ to represent the case that the attacker trains the whole model solely based on the leaked data, without any information about the bottom model or forward embedding (i.e., training from scratch).

\begin{definition}[Perfect protection]
For an attack algorithm $\mathcal A$, if the bottom model $M_b$ satisfies $Adv(M_b;\mathcal A) = 0$, then we say that $M_b$ achieves perfect protection against attack $\mathcal A$, since $M_b$ provides no extra advantage for the attack.
\end{definition}

Thus, our purpose becomes to train a split model $(M_b, M_t)$ such that:
\begin{itemize}
    \item The performance of the complete model on the original task is as high as possible.
    \item The bottom model advantage under fine-tuning attacks and clustering attacks is as small as possible. 
\end{itemize}






\section{Method: Potential Energy Loss}
In this section, we view the privacy leakage of the bottom model as a learning problem for the attacker.
We first study the generalization error when fine-tuning the model with a small number of labeled samples.
By a simplified example, we show that pushing the embeddings of same-class samples toward the decision boundary increases the generalization error.
At the same time, clustering is difficult since the same-class embeddings are no longer close to each other.
Inspired by the electrostatic equilibrium and Coulomb's law, we propose the potential energy loss on the forward embedding, to realize such distribution.
The high-level view of our idea is presented in \Cref{fig:idea}.

\subsection{Learning Error from Data Distribution}
Recall that our goal is to train a split model $(M_b, M_t)$, such that the bottom model provides little advantage to the attacker.
To do this, we consider the attack process to be a learning process on the forward embeddings produced by the bottom model.
We show that when the same-class embeddings are distributed near the boundary of the decision region, the performance of fine-tuning attack is decreased since a small number of samples cannot represent the overall distribution, and a small error on the estimation of decision boundary will cause a large classification error.
Moreover, it naturally prevents clustering attacks since same-class embeddings are no longer close to each other.

\subsubsection{Genralization Error}
We use a simplified example to get some insights into the relationship between the data distribution and generalization error.
Assume that all data points are distributed on the $d$-sphere $\{\mathbf x: \sum\limits_{i=1}^d x_i^2 = 1\}$.
Let the hypothesis set be an arbitrary hemisphere
\begin{equation}
  \mathcal H = \{h: h(\mathbf x) = \text{Sign}[\mathbf w \cdot \mathbf x], \Vert \mathbf w \Vert_2 = 1\}.
\end{equation}
Without loss of generality, we assume the target hypothesis is $f(\mathbf x) = \text{Sign}(x_1)$.
We make the following assumptions:
\begin{itemize}
    \item The probability density of samples only depends on the first dimension $x_1$, i.e., it is isotropic in any other dimensions.
    \item Given a set of positive samples $S = \{\mathbf x_1, ..., \mathbf x_n\}$, the learning algorithm simply outputs the normalized mean of these samples as the parameter of learned hypothesis, i.e., $f^{(S)}(\mathbf x) = \text{Sign}\left[\sum\limits_{i=1}^n \mathbf x_i \cdot \mathbf x \big / \big\lVert\sum\limits_{i=1}^n \mathbf x_i\big\rVert_2 \right]$.
\end{itemize}

Now we want to estimate the generalization error when the learned parameter $\mathbf w = \sum\limits_{i=1}^n \mathbf x_i \big / \big\lVert\sum\limits_{i=1}^n \mathbf x_i\big\rVert_2$ slightly differs from the true parameter $\mathbf e_1$.
Since the distribution is isotropic except in the direction of $\mathbf e_1$, we may assume that $\mathbf w$ lies on the plane expanded by the first two axis, i.e., $\mathbf w = \mathbf e_1 \cos \epsilon + \mathbf e_2 \sin \epsilon$, where $\epsilon$ is a small angle between $\mathbf w$ and $\mathbf e_1$.
The generalization error is 
\begin{equation}
\begin{split}
    \dfrac12 \cdot R[\mathbf w] & = \mathop{\mathbb E}\limits_{\mathbf x \sim \mathcal S} \text{Sign}[x_1] \cdot I[x_1 \cos \epsilon + x_2 \sin \epsilon \le 0] 
    \\
    & = \int_{\substack{x_1 > 0 \\ x_1 \cos \epsilon + x_2 \sin \epsilon \le 0 \\ x_1^2 + ... + x_d^2 = 1}}p(x_1, x_2, ...,x_d)dS 
    \\
    & \le \int_{\substack{x_1^2 + ... + x_d^2 = 1 \\0 < x_1 \le \tan\epsilon}}p(x_1, x_2, ...,x_d)dS 
    \\ 
    & \approx \int_{x_1 = 0}^{\epsilon} p_1(x_1) dx_1 \approx \epsilon p_1(0),
\end{split}
\label{eq:error-estimation}
\end{equation}
where $p$ is the probability density of sample feature, and $p_1$ is the marginal density function of $x_1$.
From \eqref{eq:error-estimation} we can see that with $\epsilon$ fixed, the generalization bound is approximately proportional to the probability mass of the data points falling near the boundary of the target region.

\subsubsection{Sampling Error}
In the above analysis, the estimation error $\epsilon$ is fixed.
We now explore the relationship between the data distribution and the distribution of $\epsilon$.
Notice that for any random variable $X$, if $X_1, ..., X_m$ are $m$ independent samples, we have
{\small$\mathbb E \left[ \left(\dfrac{1}{m}\sum_{i=1}^m X_i - \mathbb E[X] \right)^2 \right] = \dfrac{1}{m} \mathbb E\left[\left(X -\mathbb EX\right)^2 \right]$}.
In other words, if the random variable is likely to fall far from the mean of its distribution, the sample mean tends to have a larger error.
Although $\epsilon$ is not exactly the error of the sample mean in our case since it is an angle, it is also reasonable to assume $\mathbb E[\epsilon^2]\propto \mathbb E\left[(X - \mathbb EX)^2\right]$.
To make (the magnitude of) $\epsilon$ larger, the data points should be away from their mean as much as possible.
Interestingly, in our case, this is also equivalent to pushing the data points toward the decision boundary.

\subsubsection{Clustering Error}
While the above discussion is on the generalization error under a small number of labeled data, we can easily see that such distribution makes clustering difficult.
This is because the basic idea of clustering is that the data points in the same cluster shall be close to each other, and those in the different clusters shall be far from each other.
While the same-class embeddings are being pushed toward the decision boundary, intra-class distances are increased, while inter-class distances are decreased.
This is exactly opposite to the basic requirement of clustering.
\\

In summary, pushing data points to the boundary of the decision region will increase the learning error of both fine-tuning and clustering for the following reasons:
\begin{itemize}
    \item The sampling error tends to be larger.
    \item A small error in the decision region will result in a large generalization error.
    \item Intra-class distances are larger and inter-class distances are smaller.
\end{itemize}

\begin{figure}
    \centering
    \includegraphics[width=1\linewidth]{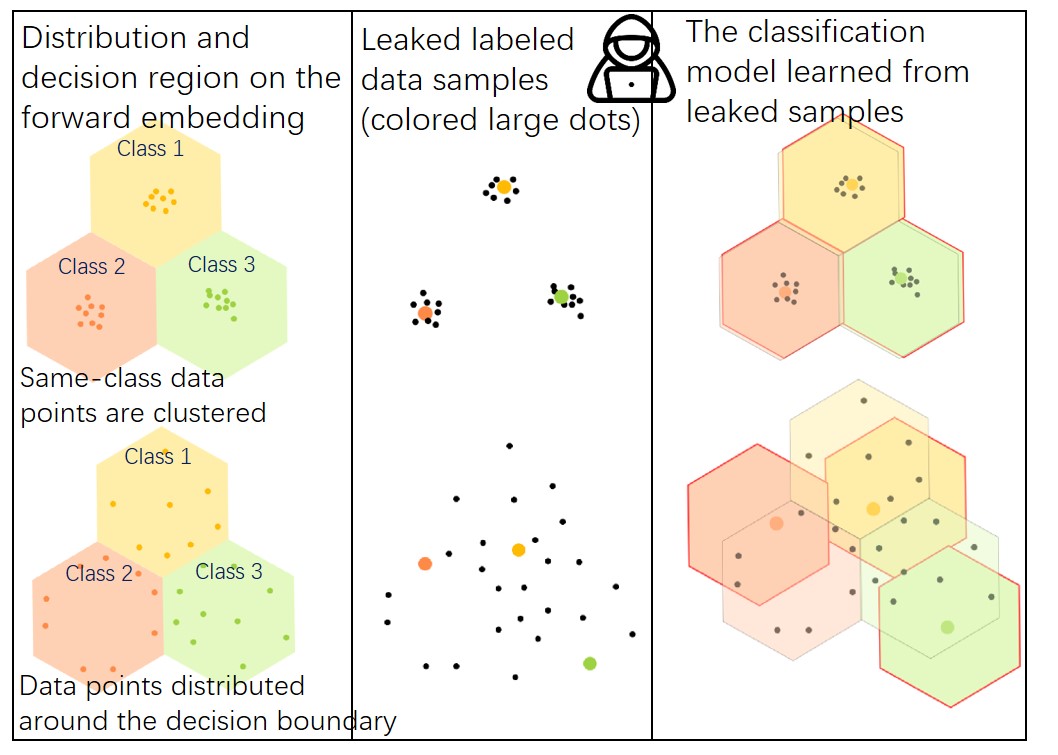}
    \caption{
    A high-level illustration of our idea.
    }
    \label{fig:idea}
\end{figure}

\subsection{Potential Energy Loss}
When electrostatic equilibrium is established, any net charge resides on the surface of the conductor~\cite[ch. 2]{griffiths2005introduction}.
This is partly caused by Coulomb's law, which tells us that \textit{like charges} (electric charges of the same sign) repel each other and \textit{opposite charges} attract each other.
Inspired by this, we can view the embeddings of the same class as like charges and have repulsive forces against each other.
As a result, those data points will tend to be away from each other and be pushed to the boundary of the decision region.

Coulomb's law is stated as follows:
\begin{equation}
\label{eq:coulomb}
    \mathbf F = k {q_1q_2(\mathbf r_1 - \mathbf r_2)}/{\Vert r_1 - r_2 \Vert_2^3},
\end{equation}
where $k$ is a constant, $q_1, q_2$ is the signed magnitude of the charges, $\mathbf r_1, \mathbf r_2$ are their positions, $\Vert\cdot\Vert_2$ is the Euclidean norm, and $\mathbf F$ is the repulsive force on the first charge caused by the second charge.
Since we assume all embeddings belonging to the same class have the same sign and magnitude while ignoring the constant term, \eqref{eq:coulomb} becomes
$\mathbf F = \dfrac{\mathbf r_1 - \mathbf r_2}{\Vert\mathbf r_1 - \mathbf r_2\Vert_2^3}$.
Notice that the repulsive force is the gradient of the electric potential energy, we can further write $\mathbf F$ as $\mathbf F = \nabla_{\mathbf r_1} \dfrac{1}{\Vert \mathbf r_1 - \mathbf r_2 \Vert_2}$,
which is naturally suited to the gradient descent method.
Based on this, we define the \textit{potential energy loss} (PELoss) as
\begin{equation}
\label{eq:loss-pe}
    L_\text{pe} = \sum\limits_{c\in\mathcal C}\sum\limits_{\mathbf z \in Z_c}\sum\limits_{\mathbf z' \in Z_c, \mathbf z' \ne \mathbf z} \dfrac{1}{\Vert \mathbf z - \mathbf z'\Vert_2},
\end{equation}
where $\mathcal C$ is the label set, and $Z_c$ is the forward embeddings of $c$-labeled samples.

By adding $L_\text{pe}$ to the loss function, during the training of the split model, the bottom model outputs of the same class are pushed away from each other, and move towards the boundary of the decision region of its own class.
While in the 3-D case, minimizing the potential energy leads to zero charge density inside the region by Thomson's theorem, it is not necessarily true in the high-dimensional case.
However, we are able to prove a weaker theorem, i.e., minimizing the potential energy leads to a non-zero probability mass in the border of the region.
\begin{restatable}[Border distribution]{theorem}{thmBorderDist}
    Consider a $d$-dimensional bounded region $\Omega \subset \mathbb R^d$, we denote the density of probability distribution in $\Omega$ which minimizes the potential energy functional as 
    \begin{equation}
    \label{eq:potential-min}
        f^* = \mathop\textnormal{argmin}_f \textnormal{PE}(f) = \mathop\textnormal{argmin}_{f} \int_{x \in \Omega} \int_{y \in \Omega} \dfrac{f(x)f(y)}{\Vert x - y \Vert_2} dUdV.
    \end{equation}
    where $f$ is the probability density function of some distribution such that $f \ge 0$ and $\int_{\Omega} f(x)dV = 1$,
    $\Delta_\epsilon \Omega = \{x: x + \epsilon r \not\in \Omega, x \in \Omega, \Vert r \Vert_2 = 1 \}$ is the set of points whose distance to the border of $\Omega$ is less than $\epsilon$.
    Then $f^*$ satisfies that $\int_{x\in\Delta_\epsilon \Omega} f^*(x)dV > 0$ for any $\epsilon > 0 $.
\end{restatable}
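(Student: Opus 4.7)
The plan is a proof by contradiction via a scaling argument: if the minimizer $f^*$ assigned no mass to an $\epsilon$-boundary layer, then a mild outward dilation of $f^*$, centered at a point of its support, would strictly decrease the potential energy and contradict minimality. To set this up, suppose $\int_{\Delta_\epsilon \Omega} f^*(x)\,dV = 0$ for some $\epsilon > 0$, so that $\mathrm{supp}(f^*) \subseteq S := \Omega \setminus \Delta_\epsilon \Omega = \{x \in \Omega : B(x,\epsilon) \subset \Omega\}$. Since $\int_\Omega f^*\,dV = 1$, $S$ is nonempty; fix any $p_0 \in S$ and, for $\delta > 0$, define the dilation $T_\delta(x) := p_0 + (1+\delta)(x - p_0)$.

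The first step is to verify $T_\delta(S) \subseteq \Omega$ when $\delta$ is small enough. For $x \in S$,
\begin{equation*}
\|T_\delta(x) - x\|_2 \;=\; \delta\,\|x - p_0\|_2 \;\le\; \delta D, \qquad D := \mathrm{diam}(\Omega) < \infty,
\end{equation*}
so choosing $\delta < \epsilon/D$ keeps $T_\delta(x)$ inside $B(x,\epsilon) \subset \Omega$. The second step is to push $f^*$ forward by $T_\delta$ to obtain the density $g(u) := (1+\delta)^{-d}\, f^*(T_\delta^{-1}(u))$, which is a probability density on $\Omega$. A change of variables $u = T_\delta(x),\ v = T_\delta(y)$ plugs in the distance scaling $\|u - v\|_2 = (1+\delta)\,\|x - y\|_2$; the two volume Jacobians $(1+\delta)^d$ precisely cancel the two density rescalings, leaving only the inverse-distance factor, which yields
\begin{equation*}
\mathrm{PE}(g) \;=\; \frac{1}{1+\delta}\, \mathrm{PE}(f^*).
\end{equation*}
Since the kernel $1/\|x-y\|_2$ is strictly positive and $f^*$ is a genuine probability density, $\mathrm{PE}(f^*) > 0$, hence $\mathrm{PE}(g) < \mathrm{PE}(f^*)$, contradicting the minimality of $f^*$.

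The main obstacle is the geometric step of guaranteeing $T_\delta(S) \subseteq \Omega$: for a general, possibly nonconvex $\Omega$, a dilation from an arbitrary interior center can easily carry interior points outside the region. The hypothesis that $f^*$ charges no $\epsilon$-layer is exactly what bypasses this issue, because it supplies a uniform $\epsilon$-buffer around every support point that absorbs the displacement $\|T_\delta(x) - x\|_2 \le \delta D$. A minor technical caveat, which I would address in a short remark, is that the argument needs $\mathrm{PE}(f^*) < \infty$ for the strict inequality to be meaningful; this is automatic for $d \ge 2$, since the Riesz kernel $1/\|x-y\|_2$ is locally integrable and $\Omega$ is bounded, so the uniform distribution on $\Omega$ is already an admissible competitor with finite energy, ensuring the infimum is finite and attained (up to approximation) by some $f^*$.
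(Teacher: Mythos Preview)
Your argument is correct and takes a genuinely different route from the paper. The paper's proof is a local mass-transport construction: assuming $f$ places no mass in $\Delta_\epsilon\Omega$, it encloses $\mathrm{supp}(f)$ in the smallest ball $B_{r_0}(x_0)$, picks a small sub-ball $B_{\epsilon/4}$ of mass near its surface, rigidly translates that mass outward along the radial direction into $\Delta_\epsilon\Omega$, and then shows by a case analysis on the radial coordinate that every cross-distance $|x+r'-y|$ is strictly smaller than $|x+r''-y|$, so the energy drops. Your approach replaces this pointwise distance comparison with a single global dilation $T_\delta$ centered at a support point and exploits the degree-$(-1)$ homogeneity of the Riesz kernel to get the exact identity $\mathrm{PE}(g)=(1+\delta)^{-1}\mathrm{PE}(f^*)$ in one line. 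The $\epsilon$-buffer around every support point, which you correctly flag as the crux, is what makes the dilation admissible without any convexity assumption on $\Omega$; the paper needs the analogous buffer to guarantee its translated ball lands in $\Omega$. Your proof is shorter, avoids the somewhat delicate radial case split, and immediately generalizes to any kernel $\|x-y\|^{-s}$ with $s>0$; the paper's translation argument, by contrast, would survive for kernels that are merely strictly decreasing in distance but not homogeneous. Your remark on finiteness (needing $d\ge 2$ for local integrability of $1/\|x-y\|$) is a fair caveat that the paper does not address either.
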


\begin{proof}
    See Appendix A.
\end{proof}

\subsubsection{Adding Layer Normalization}
One hidden condition for our method is that the decision boundary is also the actual set boundary of same-class embeddings.
This requires the embedding space to be a borderless manifold.
Otherwise, adding repulsive force may just push the embeddings toward the set boundary instead of the decision boundary.
For example, if the embedding space is the trivial Euclidean space $\mathbb R^d$, the repulsive force will make embeddings far from the origin, while the inter-class distances are still large.
To overcome this, we simply enforce layer normalization~\cite{leiba2016layernorm} (without element-wise affine transformation) on $\mathbf z$, which restricts $\mathbf z$ to the $d$-sphere of radius $\sqrt{d}$, i.e., $\Vert\mathbf z \Vert_2^2=d$.
Accordingly, the distance metric is changed to angular distance, i.e., $\arccos \langle \mathbf z,\mathbf z' \rangle$.
The potential energy loss defined in \eqref{eq:loss-pe} should be changed to 
\begin{equation}
\small
    L_\text{pe} = \sum\limits_{c\in\mathcal C}\sum\limits_{\substack{\mathbf z, \mathbf z' \in Z_c \\ \mathbf z' \ne \mathbf z}}
    \dfrac{1}{\arccos \langle\mathbf z, \mathbf z'\rangle}.
\end{equation}

The combined loss for split training is $L' = L + \alpha L_\text{pe}$, where $L$ is the original loss function (i.e., cross-entropy loss), $\alpha$ is a coefficient to control the intensity of repulsive force.

\subsection{Relationship with Distance Correlation}
Distance correlation is used to protect both the input feature~\cite{Vepakomma2020nopeek} and the label~\cite{sunjiankang2022split_learning_label}.
Here we consider the label-protection case that distance correlation loss is applied to the forward embedding and the label.
The distance correlation loss on one batch is
\begin{equation}
\label{eq:dcor-loss-0}
    L_\text{dcor} = \sum_{i, j=1}^n {d_{i, j}}{d'_{i, j}} \big/ \sqrt{\sum_{i, j=1}^n d^2_{i, j}\sum_{i, j=1}^n d'^2_{i, j}},
\end{equation}
where $d_{i, j}$ is the doubly-centered distance between $i$-th sample's embedding and $j$-th sample's embedding, and $d'_{i,j}$ is the doubly-centered distance between $i$-th label and $j$-th label.
In the classification task, same-class samples have the same label.
If the $i$-th sample and $j$-th sample belong to the same class, we have $d'_{i,j} = 0$.
Thus, \eqref{eq:dcor-loss-0} (ignoring the denominator) is converted to
\begin{equation}
\small
    \sum_{\substack{c, c'\in \mathcal C \\ c\ne c'}}\sum_{\substack{\mathbf z \in Z_c \\ \mathbf z' \in Z_{c'}}} k\left(\Vert\mathbf z - \mathbf z'\Vert_2 - \overline{\Vert\mathbf z - \cdot\Vert_2} - \overline{\Vert\cdot - \mathbf z'\Vert_2} + \overline{\Vert \cdot - \cdot \Vert_2}\right),
\end{equation}
where $\mathcal C$ is the set of labels, $k$ is some constant, $\overline{\Vert\mathbf z - \cdot\Vert_2}$ is the average distance from $\mathbf z$ to other embeddings within the batch (similar for $\overline{\Vert\mathbf \cdot - \mathbf z'\Vert_2}$), and $\overline{\Vert \cdot - \cdot \Vert_2}$ is the average distance between all embedding pairs within the batch.
We can see that minimizing the distance correlation is similar to minimizing the inter-class distances.
As our method is to maximize intra-class distances, minimizing distance correlation has a similar effect.
However, in an intuitive understanding, the fact that embeddings of different classes lie in different directions makes distance correlation naturally noisy.
Experiments in \Cref{sec:exp-pair} also illustrate that minimizing distance correlation fails to push away some same-class samples.

\section{Empirical Study}
\begin{figure*}[t]
    \centering
    \includegraphics[width=1\linewidth]{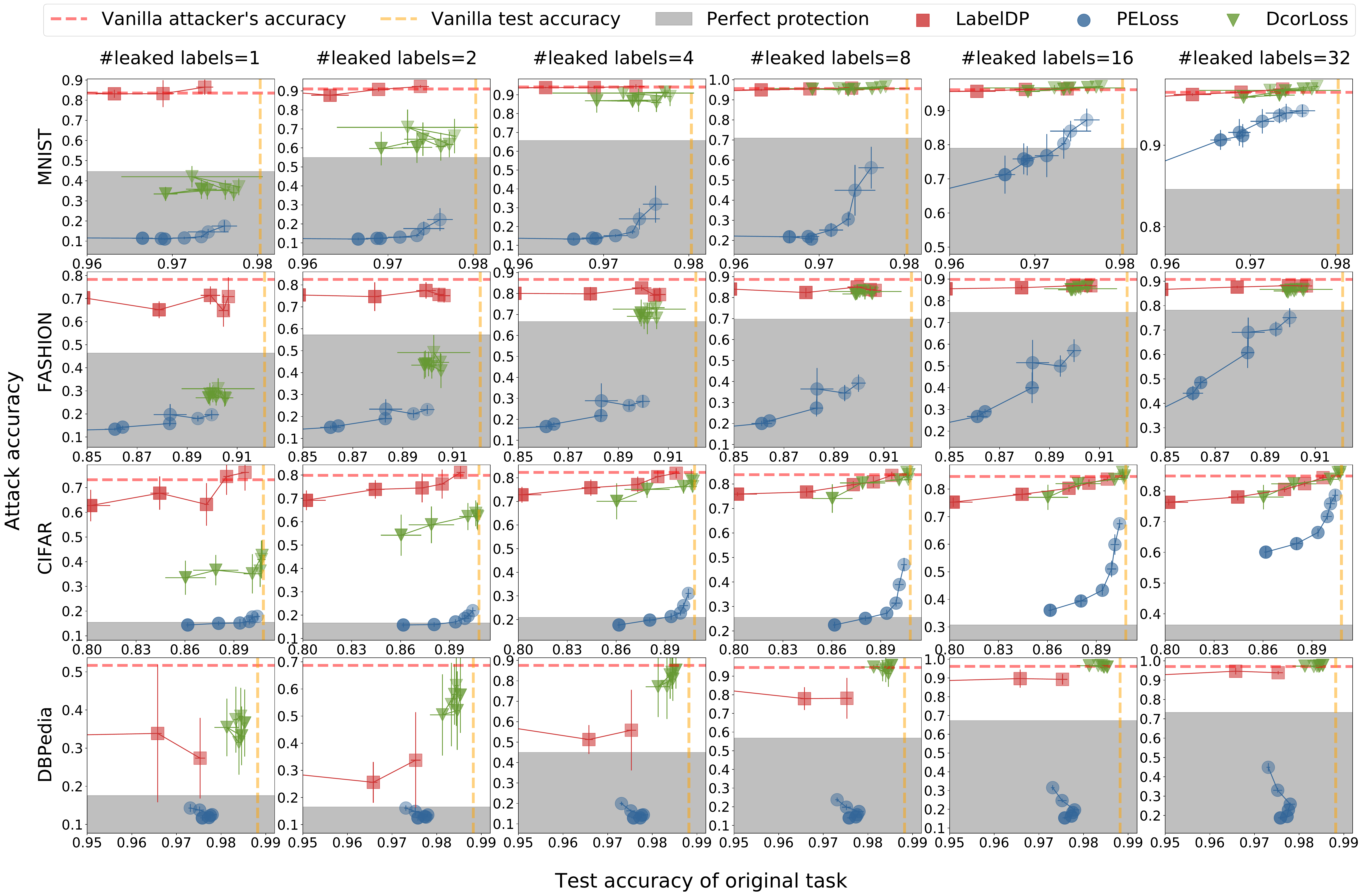}
    \caption{Test accuracy vs. attack accuracy of the fine-tuning attack. 
        The data point lies at the lower-right position means the experiment achieves higher test accuracy on the original task with lower attacker's accuracy, which is our desired result.
        The data point lies in the gray area means that the attacker's accuracy is lower than training from scratch without the bottom model's information, i.e., perfect protection.
        }
    \label{fig:fine-tuning_attack}
\end{figure*}

\begin{figure*}[h]
    \centering
    \includegraphics[width=1\linewidth]{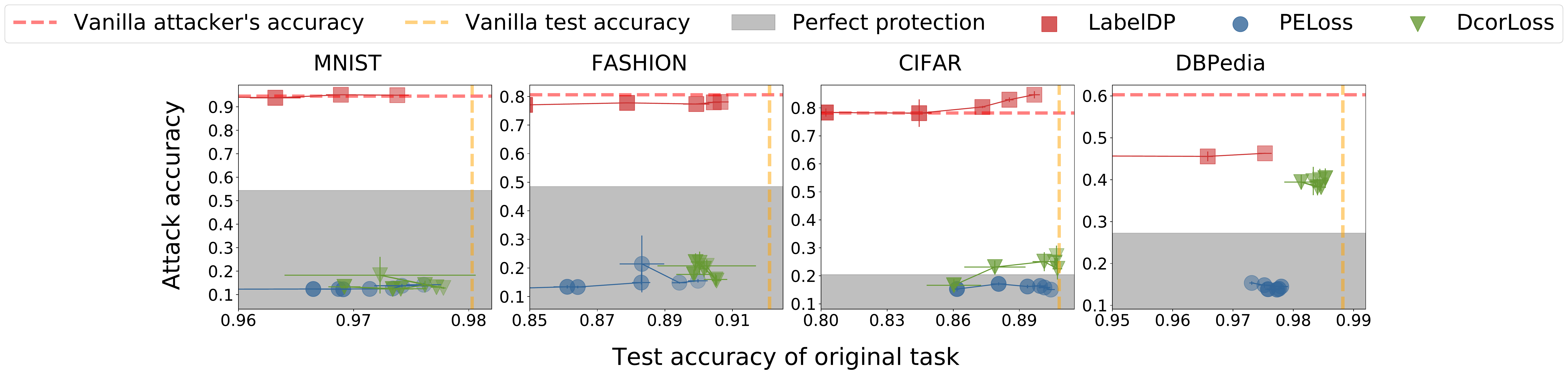}
    \caption{Test accuracy vs. attack accuracy of the clustering attack.
    The notations are the same as \Cref{fig:fine-tuning_attack}.}
    \label{fig:clustering_attack}
\end{figure*}

We conduct experiments on four different datasets, i.e., MNIST~\cite{mnist}, Fashion-MNIST~\cite{fashion}, CIFAR-10~\cite{cifar}, and DBpedia~\cite{2007dbpedia}.
We compare the attack performance of vanilla split training, training with potential energy loss (PELoss, our method), training with distance correlation loss (DcorLoss) proposed by \cite{Vepakomma2020nopeek,sunjiankang2022split_learning_label}, and a simple baseline of label differential privacy method based on randomly flipping a certain portion of labels (LabelDP).
For DcorLoss, layer normalization is also added like in our method for training stability.
The attacks include both fine-tuning attacks and clustering attacks.
For clustering attacks, we use the classical k-Means algorithm~\cite{macqueen1967kmeans}, with the number of classes known to the attacker.
We also measure the distances between sample pairs to illustrate the mechanism of PELoss.

Due to space limits, detailed descriptions of the model architectures and training strategies, and more results are presented in Appendix B and C.
Other results include studies on different split positions, different attack layers, different forward embedding dimensions, and the t-SNE~\cite{van2008tsne} visualization of the forward embeddings.

\subsection{Experiment Settings}
We implement the experiment codes using the PyTorch and Scikit-Learn~\cite{scikit-learn} libraries,
and run them on servers with NVIDIA RTX3090 GPUs.
Each experiment is repeated 3 times for the original tasks and 5 times for the attack tasks, with different random seeds.
We use a 3-layer fully-connected network for MNIST, a simple convolutional network for Fashion-MNIST, ResNet-20~\cite{hekaiming2016resnet} for CIFAR-10 dataset, and TextCNN~\cite{kimyoon2014textcnn} for DBPedia.
The split position for each model is the last dense layer by default since the last forward embedding is the closest to the label and is the most difficult to protect.
For the selection of hyperparameters, we vary the loss coefficient ($\alpha$)  of PELoss from $0.25\sim 32$ and $1\sim 32$ for DcorLoss.
For LabelDP, the ratio of randomly flipped labels varies from $0.01\sim 0.16$.
In all experiments, the value doubles each time.
Detailed experiment settings are provided in Appendix B.

\subsection{Fine-tuning Attack}
We report the test accuracy of the original task and the accuracy of fine-tuning attacks using different protection methods in \Cref{fig:fine-tuning_attack}.
Deeper colors represent stronger protection, e.g., larger loss coefficient or label flipping probability, while lighter colors mean more leaning toward preserving the test accuracy, e.g., smaller loss coefficient or label flipping probability.
We also plot the test accuracy (orange dashed line) and the attack accuracy (red dashed line) in the vanilla split learning case, and the perfect protection area (where the attack accuracy is lower than the accuracy of training from scratch using leaked labels).
For all methods, we observe that decreasing the attack accuracy usually also lowers test accuracy on the original task,
and more leaked labels lead to higher attack accuracy.
Although all methods protect privacy to some extent at the cost of damaging the model performance, it is obvious that our proposed PELoss is superior to DcorLoss and LabelDP.
PELoss has the following advantages compared with other methods:
\begin{itemize}
    \item The curves of PELoss are constantly at the lower-right side of other methods.
    In other words, on the same test accuracy level, the PELoss has a significantly lower attacker's accuracy than DcorLoss and LabelDP.
    
    \item The curves are smoother and the error bars are also smaller using PELoss, indicating it is more responsive to the change of $\alpha$ and the performance is more stable.
    Thus, it is easier to balance privacy and model performance using PELoss.
    \item PELoss has many data points in the perfect protection zone, while other methods rarely achieve perfect protection.
\end{itemize}

In summary, PELoss provides significantly stronger privacy protection against model completion attacks, while preserving the model performance better than other methods.
We also notice that in some datasets, varying the coefficients of DcorLoss seems to have no effect, however, when the coefficient is large enough (e.g., 32), the training could diverge.

\subsection{Clustering Attack}
We report the test accuracy of the original task and the accuracy of clustering attacks using different protection methods in \Cref{fig:clustering_attack}.
In all the experiments, the attacker can access the entire unlabeled test dataset which contains more than 10,000 unlabeled samples, and the number of classes is known to the attacker.
The notations are the same as fine-tuning attacks, while here the gray area (perfect protection) means that the accuracy of clustering on forward embeddings is lower than directly clustering on the raw input samples.
The accuracy here is defined as the maximum accuracy among all possible cluster-label assignments.

We can see that PELoss achieves perfect protection in all cases (i.e., the clustering result on the forward embeddings is worse than on the raw input data), and is significantly better than DcorLoss on CIFAR and DBPedia.
On the other side, LabelDP performs badly against clustering attacks.

\subsection{Distances between Sample Pairs}
\label{sec:exp-pair}
\begin{figure}[h]
    \centering
    \includegraphics[width=1\linewidth]{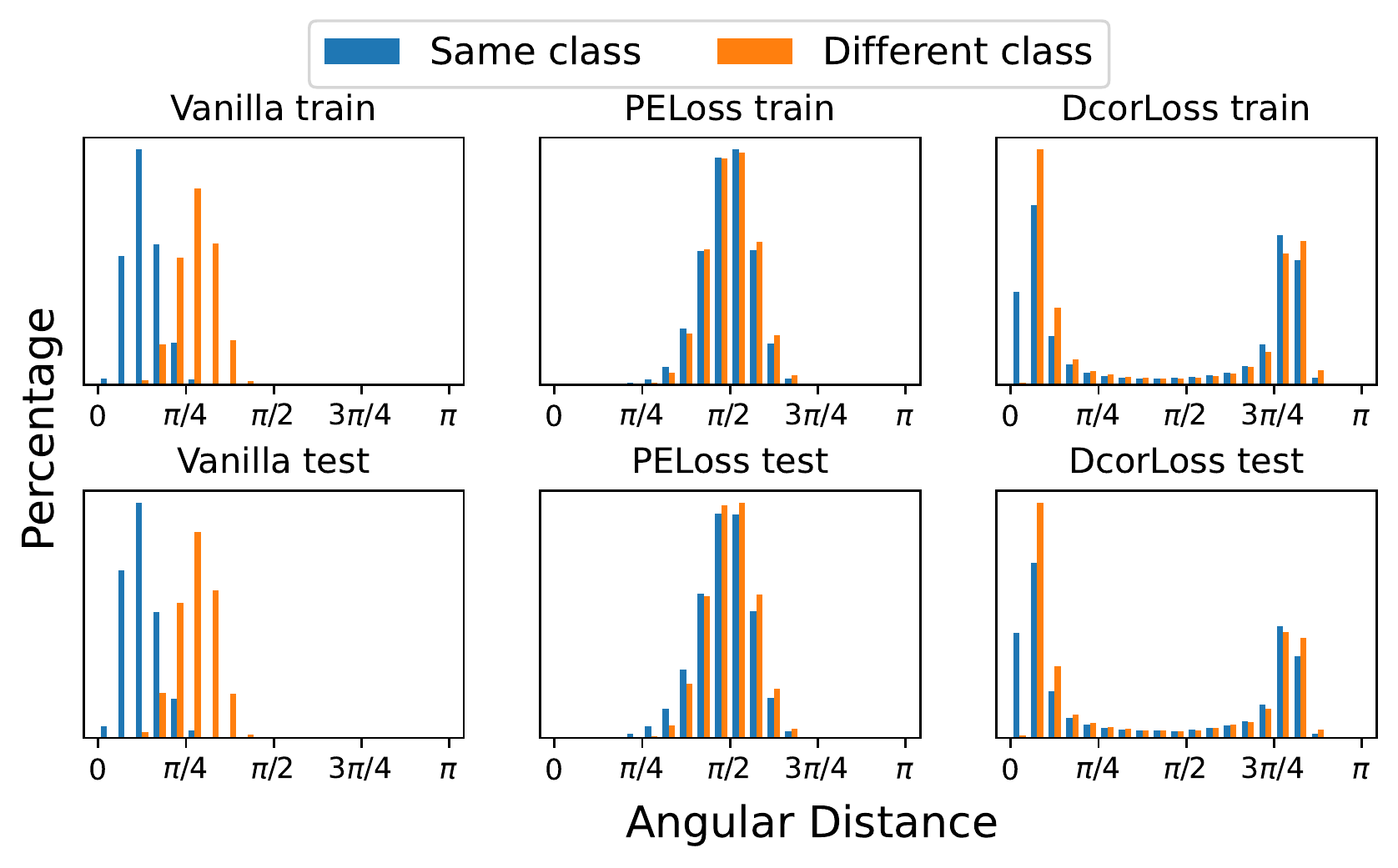}
    \caption{Angular distance distribution between sample pairs on MNIST. $\alpha=1$ for PELoss and 0.5 for DcorLoss, as they have similar performance on the original task.}
    \label{fig:angular-dist}
\end{figure}
To better illustrate the distribution of forward embeddings, we plot the distribution of angular distances between same-class and different-class sample pairs in \Cref{fig:angular-dist}.
For vanilla training, the angular distances are small between same-class samples, and are large between different-class samples.
For PELoss, the angular distances between sample pairs are close to $\pi/2$, no matter whether they belong to the same class or not, in both training data and test data.
In contrast, for DcorLoss, the angular distances appear to have a double-peak distribution.
We can see that there are significantly more same-class sample pairs that have an angular distance around 0, which weakens the protection.

\section{Conclusion}
In this paper, we investigate the privacy leakage of split learning arising from the learned bottom model.
We view the model completion attack as a learning procedure for the attacker and turn the privacy-preserving problem into a learning problem.
We find that pushing embeddings to their decision boundary increases the learning error for the attacker, and propose the potential energy loss on forward embedding to protect the privacy.
Extensive experiments show that our method significantly restricts the ability of the attacker's fine-tuning and clustering attacks while reserving the model performance, superior to baselines in terms of both utility and privacy.
The limitation of this work mainly includes that only the inference process is protected, and the lack of a theoretical leakage bound.
%

\section*{Acknowledgments}
This work is supported in part by the National Natural Science Foundation of China (No. 72192823), the “Ten Thousand Talents Program” of Zhejiang Province for Leading Experts (No. 2021R52001), and Ant Group.

\bibliographystyle{named}
\bibliography{ref}

\clearpage

\appendix
\begin{table*}[t]
\centering
\begin{tabular}{@{}cc@{}}
\toprule
Task        & Network Structure                                                                                                                                                                                  \\ \midrule
MNIST       & Linear(784, 128)-LeakyReLU-Linear(128, 32)-LeakyReLU-Linear(32, 10)                                                                                                                                \\ \midrule
FASHION     & \begin{tabular}[c]{@{}c@{}}Conv(1,32,kernel\_size=5)-LeakyReLU-MaxPool(2)-\\ Conv(32, 64, kernel\_size=3, padding=1)-LeakyReLU-MaxPool(2)-\\ Linear(2304, 128)-Tanh-Linear(128, 10)\end{tabular} \\ \midrule
CIFAR       & ResNet-20 described in the original paper.                                                                                                                                                          \\ \midrule
DBPedia     & \begin{tabular}[c]{@{}c@{}}TextCNN with filter sizes {[}3, 4, 5{]} and 200 channels, \\ using glove.6B.50d pretrained word embeddings.\end{tabular}                                              \\ \bottomrule
\end{tabular}
\caption{Model architectures used in the experiments.}
\label{tab:model-arch}
\end{table*}

\begin{figure*}[h]
    \centering
    \includegraphics[width=1\linewidth]{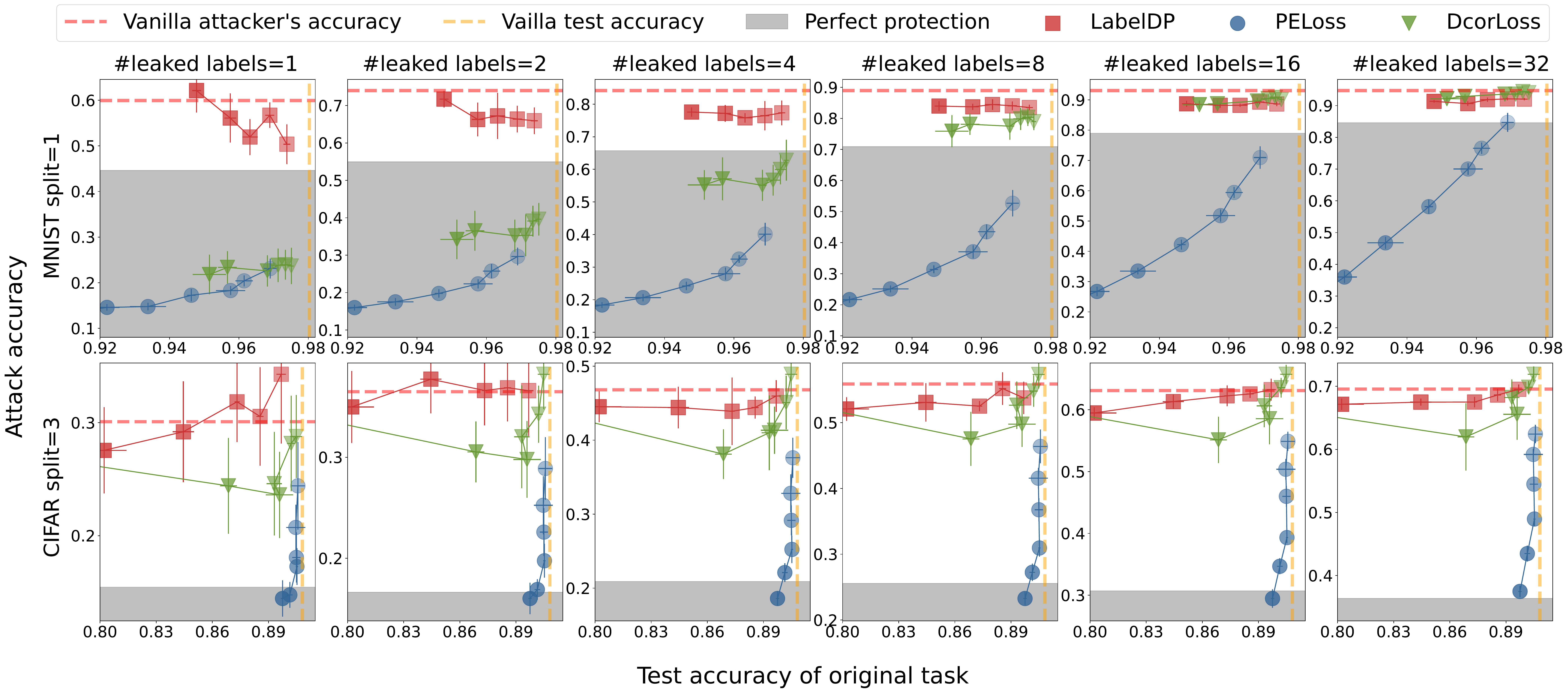}
    \caption{Test accuracy of original task vs. accuracy of fine-tuning attack.}
    \label{fig:fine-tuning-middle}
\end{figure*}

\section{Proof of Theorem 1}
\thmBorderDist*
\begin{proof}
Note we use v instead of $\Vert\cdot\Vert_2$ to denote the Euclidean norm here for simplicity.
Consider the ball $B_{r}(x_0)$ centered at $x_0$ with radius of $r$.
Let 
\begin{equation}
    r_0 = \inf_{|r|} \left\{ \int_{y\in \Omega - B_{r}(x_0)} f(y) dV = 0 \right\},
\end{equation}
i.e., $f(x)$ is zero almost everywhere outside of $B_{r_0}(x_0)$.
Since $B_{r_0}(x_0)$ is the smallest ball originated at $x_0$ that contains all mass of $f(x)$, for some small number $\delta > 0$, we can find a small ball $B_{\epsilon/4}(x + r')$ near the surface of $B_{r_0}(x_0)$ such that 
\begin{equation}
    \int_{y\in B_{\epsilon/4}(x + r')} f(y)dV = \delta,
\end{equation}
where $|r'| = r_0 - \dfrac{\epsilon}{4}$.

Now consider moving the probability mass of $f$ in $B_{\epsilon/4}(x + r')$ to $B_{\epsilon/4}(x + r'')$, where $B_{\epsilon/4}(r'') \subset \Delta_\epsilon \Omega$ and $|r''| \ge r_0 + \epsilon$.
It is feasible because $r' + 1/4\epsilon e_r \in \Omega - \Delta_\epsilon \Omega$ and $B_{\epsilon}(r' + 3/4\epsilon e_r) \subset \Omega$.
Here we use $e_r$ to denote the unit vector with the same direction as $r'$.
In other words, we can have the following new probability distribution
\begin{equation}
    f'(t) = \begin{cases}
        f(t - r'' + r') & \quad \text{for $t \in B_{\epsilon/4}(x + r'')$},\\
        0               & \quad \text{for $t \in B_{\epsilon/4}(x + r')$},\\
        f(t)            & \quad \text{otherwise}.
    \end{cases}
\end{equation}
For convenience, we denote $B_{\epsilon/4}(x + r')$ as $B_1$, $B_{\epsilon/4}(x + r'')$ as $B_2$, $\Omega - B_1 - B_2$ as $\Omega'$.
The potential energy difference between $f$ and $f'$ can be calculated as follows:

\begin{align}
\small
    &\text{PE}(f) - \text{PE}(f') = \int_{x \in \Omega} \int_{y \in \Omega} \dfrac{f(x)f(y) - f'(x)f'(y)}{|x-y|}dUdV 
    \nonumber \\
    & = 
    \int_{x \in B_1 \cup B_2} \int_{y \in \Omega'} \dfrac{f(x)f(y) - f'(x)f'(y)}{|x - y|}dUdV 
    \nonumber\\ & \quad +
     \int_{x \in \Omega'} \int_{y \in B_1 \cup B_2} \dfrac{f(x)f(y) - f'(x)f'(y)}{|x-y|}dUdV
    \nonumber\\ & \quad +
     \int_{x \in B_1 \cup B_2} \int_{y \in B_1 \cup B_2} \dfrac{f(x)f(y) - f'(x)f'(y)}{|x-y|}dUdV 
    \displaybreak\nonumber\\     
     & = 2 \int_{x \in B_1} \int_{y \in \Omega'} \dfrac{f(x)f(y)}{|x-y|}dUdV
     \nonumber\\ & \quad -
     2 \int_{x \in B_2} \int_{y \in \Omega'} \dfrac{f'(x)f(y)}{|x-y|}dUdV
     \nonumber\\ & \quad +
     \int_{x \in B_1} \int_{y \in B_1} \dfrac{f(x)f(y)}{|x-y|}dUdV
     \nonumber\\ & \quad -
     \int_{x \in B_2} \int_{y \in B_2} \dfrac{f'(x)f'(y)}{|x-y|}dUdV.
\end{align}

Notice that $f'$ is just moving the probability mass of $f$ from $B_1(r')$ to $B_2$, the latter two terms are canceled out.
Thus, we only need to consider the term 
\begin{equation}
\label{eq:pe-diff-2}
\begin{split}
    &\dfrac{1}{2}[\text{PE}(f) - \text{PE}(f')]= \\ 
    &\int_{x \in B_1} \int_{y \in \Omega'} \dfrac{f(x)f(y)}{|x-y|}dUdV
    \\ &-
    \int_{x \in B_2} \int_{y \in \Omega'} \dfrac{f'(x)f(y)}{|x-y|}dUdV.
\end{split}
\end{equation}
We can define the probability in the ball to be
\begin{equation}
    g(s) = f(x - r') = f'(x - r''),
\end{equation}
where $s \in B_{\epsilon / 4}(0)$.
Thus, \eqref{eq:pe-diff-2} can be further written as
\begin{equation}
\label{eq:pe-diff-3}
\small
    \int_{x \in B_{\epsilon/4}(0)}\int_{y \in \Omega'} 
    g(x)f(y) \left[ \dfrac{1}{|x + r' - y|} - \dfrac{1}{|x + r'' - y|} \right]dUdV.
\end{equation}
Since $r'$ and $r''$ have the same direction, we now decompose $x$ and $y$ into two components, i.e., let
\begin{equation}
    \begin{cases}
        x_v = (x \cdot e_r) e_r \\
        x_h = x - x_v
    \end{cases} 
    \quad \text{and} \quad
    \begin{cases}
        y_v = (y \cdot e_r)e_r \\
        y_h = y - y_v
    \end{cases}
\end{equation}

Recall the assumption that $f(x) = 0$ outside the ball $B_{r}(x_0)$, then we have
\begin{equation}
\begin{split}
        |x + r' - y|^2 = & |x_h + y_h|^2 + |x_v + r' - y_v|^2 
        \\ =& |x_h + y_h|^2 + |x_v \cdot e_r + r' \cdot e_r - y_v \cdot e_r|^2, \\
        |x + r'' - y|^2 =& |x_h + y_h|^2 + |x_v + r'' - y_v|^2 \\
        =& |x_h + y_h|^2 + |x_v \cdot e_r + r''  \cdot e_r - y_v \cdot e_r|^2,
\end{split}
\end{equation}
where $e_r$ is the unit vector along $r'$.
Then consider the following two cases:
\begin{enumerate}
    \item $x_v \cdot e_r + r' \cdot e \ge y_v \cdot e_r$. 
    Since $r'' \cdot e_r > r' \cdot e_r$, we have $|x + r' - y|^2 < |x + r'' - y|^2$
    
    \item $x_v \cdot e_r + r' \cdot e < y_v \cdot e_r$.
    Since that $y_v \cdot e_r < r_0, x_v \cdot e_r + r' \cdot e_r \ge r_0 - \epsilon / 2$, and $x_v \cdot e_r + r'' \cdot e_r > r_0 + \epsilon / 2$,
    we have
    \begin{equation}
        (x_v \cdot e_r + r'' \cdot e_r) - y > \epsilon / 2 > y_v\cdot e_r - (x_v \cdot e_r + r' \cdot e_r),
    \end{equation}
    so that $|x + r' - y|^2 < |x + r'' - y|^2$.
\end{enumerate}
In both cases, $|x + r' - y| < |x + r'' - y|$ for any $x \in B_{\epsilon / 4}(0)$ and $y \in B_{r_0}(x_0)$, and thus
\begin{equation}
    \dfrac{1}{|x + r' - y|} - \dfrac{1}{|x + r'' - y|} > 0.
\end{equation}
This leads to $\text{PE}(f') < \text{PE}(f)$, which implies that as long as the distribution $f$ has zero mass on the border region of $\Omega$, i.e., $\Delta_\epsilon \Omega$, we can always find a new distribution $f'$ having lower potential energy by moving some mass of $f$ to the border region.
Thus, the distribution that minimizes potential energy must have a non-zero probability mass in the border, which completes our proof.
\end{proof}

\section{Detailed Experiment Settings}

The description of network structures is in \Cref{tab:model-arch}.
For all tasks, including the original tasks and fine-tuning attacks, we use the Adam optimizer with default parameters for faster training.
\subsubsection{Original Tasks.}
For training with PELoss, DcorLoss, and LabelDP we train the model for 100 epochs and save the model with the best performance on the validation set.
For PELoss and DcorLoss, We save the best model between the 90th to 100th epoch to ensure sufficient optimization of the loss function.
For LabelDP, we save the best model between the 50th to 100th epoch to avoid severe overfitting in more epochs.
As for vanilla split training, we use the early stopping strategy of 20 epochs.

\subsubsection{Fine-tuning Attack.}
During the fine-tuning, the entire bottom model remains fixed, as we empirically find this has the best attack accuracy.
The stop criterion for the attacker's fine-tuning tasks is that the training error is smaller than 0.01 or the epochs reach 1,000.
When the top model contains only one fully-connect layer, we use the mean of forward embeddings of each class as the initialization of the parameters of the top model, which greatly accelerates the fine-tuning attack.

\begin{figure}[ht]
    \centering
    \includegraphics[width=1\linewidth]{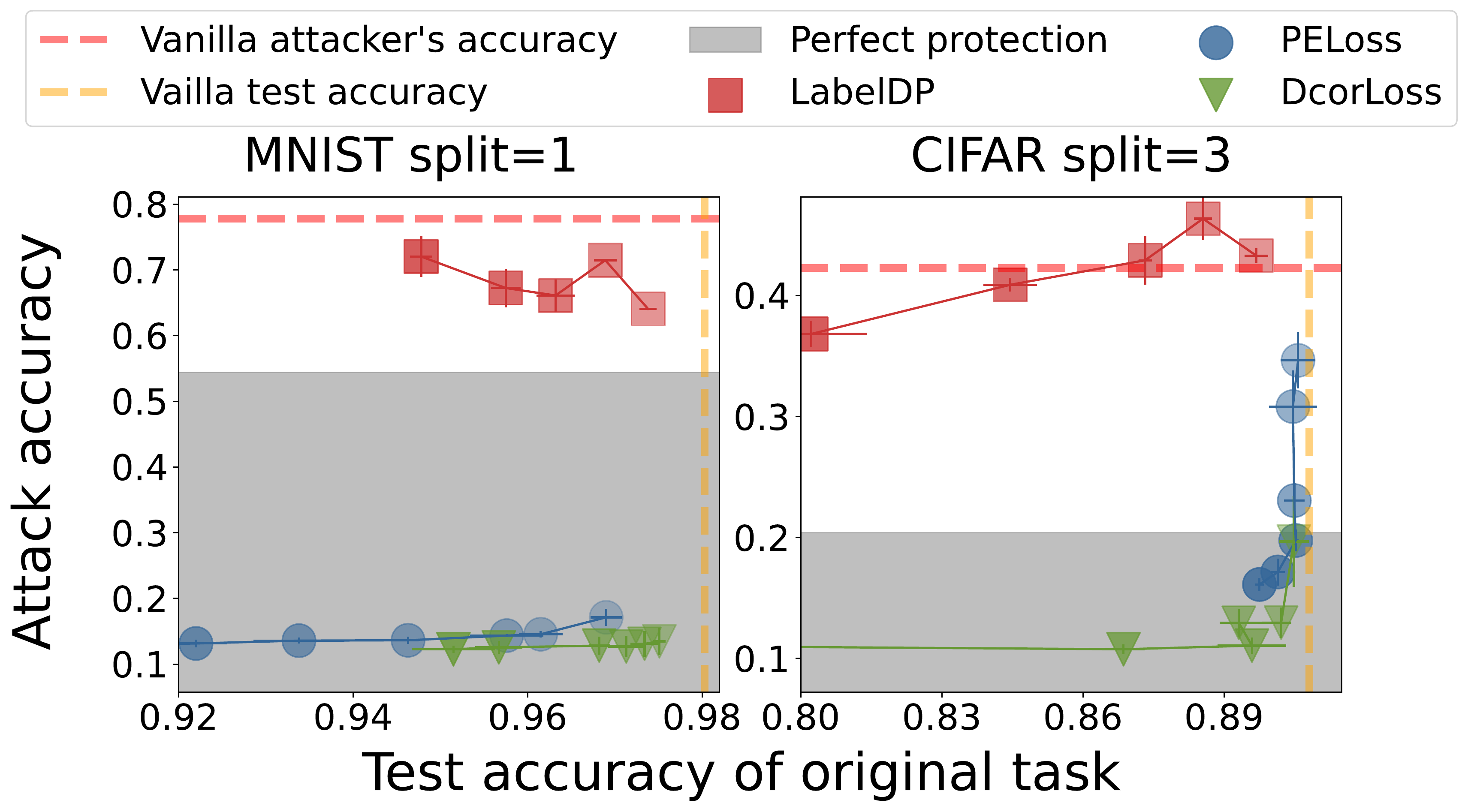}
    \caption{Test accuracy of original task vs. accuracy of clustering attack.}
    \label{fig:clustering-middle}
\end{figure}

\subsubsection{Clustering Attack.}
For clustering attacks, we adopt the k-Means algorithm from the Scikit-Learn library with default parameters, i.e., 100 rounds and 10 random initialization points.
The clustering result is measured as the maximum accuracy under all label assignments, since the clustered label may not correspond to the original label.
For example, suppose $c_{i,j}$ is the number of samples in $i$-th cluster with ground truth label $j$, where $1 < i, j \le C$, the accuracy is defined as
\begin{equation}
    \max\limits_{k_1,\cdots,k_C \in \text{Permutation}(1,\cdots,C)} \dfrac{\sum_{i=1}^C c_{i,k_i}}{\sum_{i,j=1}^C c_{i,j}}.
\end{equation}

\section{More Experiment Results}
\subsection{Other Split Position}
To study the effect of different split positions, we conduct additional experiments on MNIST and CIFAR.
In MNIST, we split the DNN by its second layer instead of its third (last) layer, which means the bottom model is a single fully-connect layer and the top model contains two fully-connect layers.
In CIFAR, we split the ResNet-20 model by its third residual block, which means the bottom model contains the first convolutional layer and the first two residual blocks, and the top model includes the third residual block, the final pooling and fully-connect layer.
The result of fine-tuning attacks and clustering attacks are displayed in \Cref{fig:fine-tuning-middle,fig:clustering-middle}.

We can see that splitting at previous layers improves the performance of all methods.
However, PELoss still outperforms DcorLoss against fine-tuning attacks and can achieve perfect protection even with 32 leaked labels per class.
As for clustering attacks, both methods can achieve perfect protection.

\begin{figure}[h]
    \centering
    \includegraphics[width=1\linewidth]{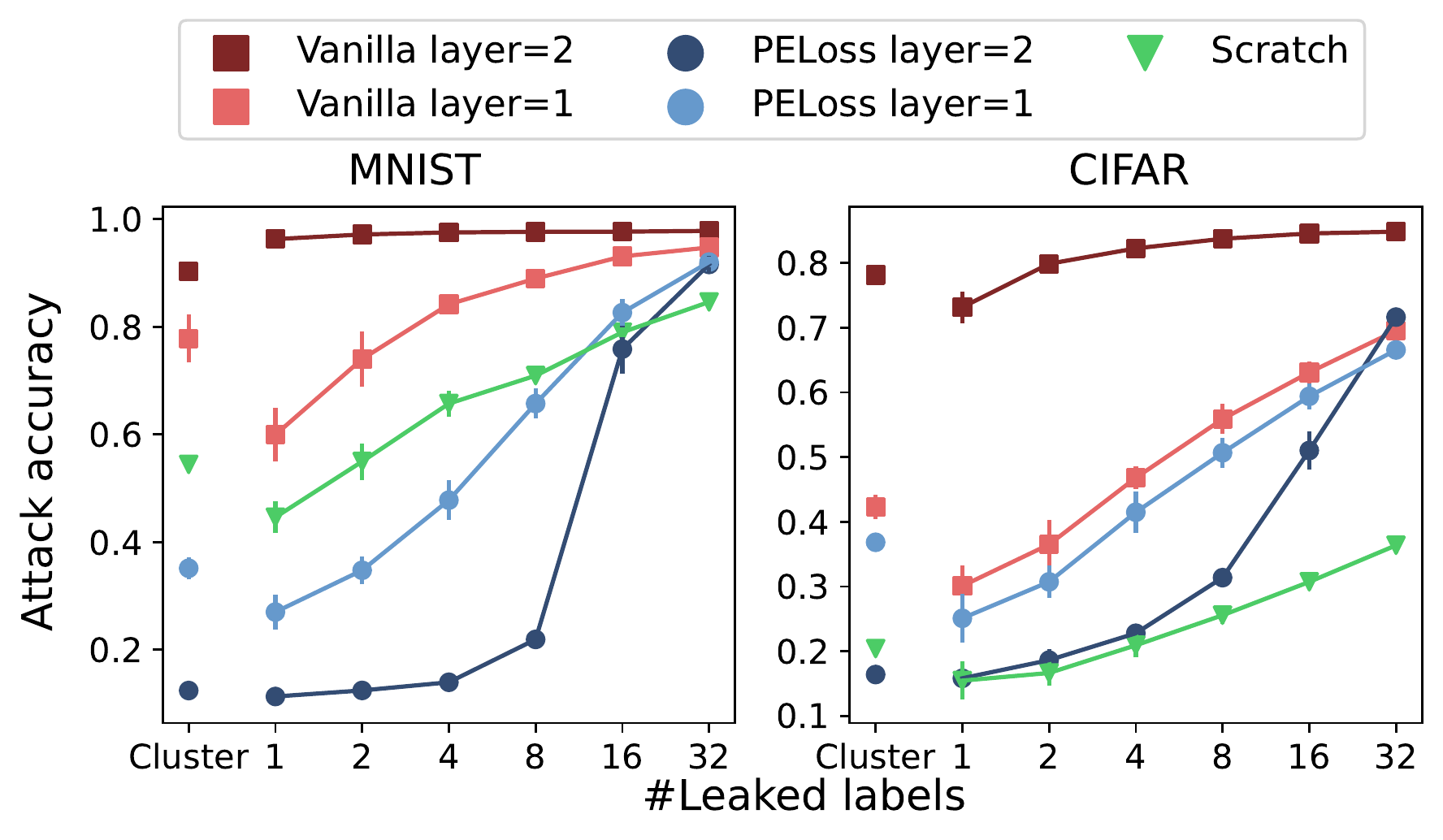}
    \caption{Test accuracy of original task vs. accuracy of clustering attack when attacking on earlier layers.
    Dots in the first column is the result of clustering attacks.
    For PELoss, $\alpha=4$.}
    \label{fig:attack-layer}
\end{figure}

\subsection{White-Box Attack at Earlier Layer}
While in previous experiments, we assume the attacker only gets the output of the bottom model.
If the attacker has full access to the bottom model, he can also get embeddings from the outputs of earlier layers.
Hence, we report the attack results on earlier layers of the MNIST and CIFAR datasets.
The split position is the last linear layer, while the attack position is the first layer and third layer for MNIST and CIFAR.
The detailed definition of the layer can be found in the previous section.
The results are shown in \Cref{fig:attack-layer}.

Although the attacks on earlier layers show better performance, it is still quite lower than the non-protected case (vanilla split learning).

\begin{figure}[h]
    \centering
    \includegraphics[width=1\linewidth]{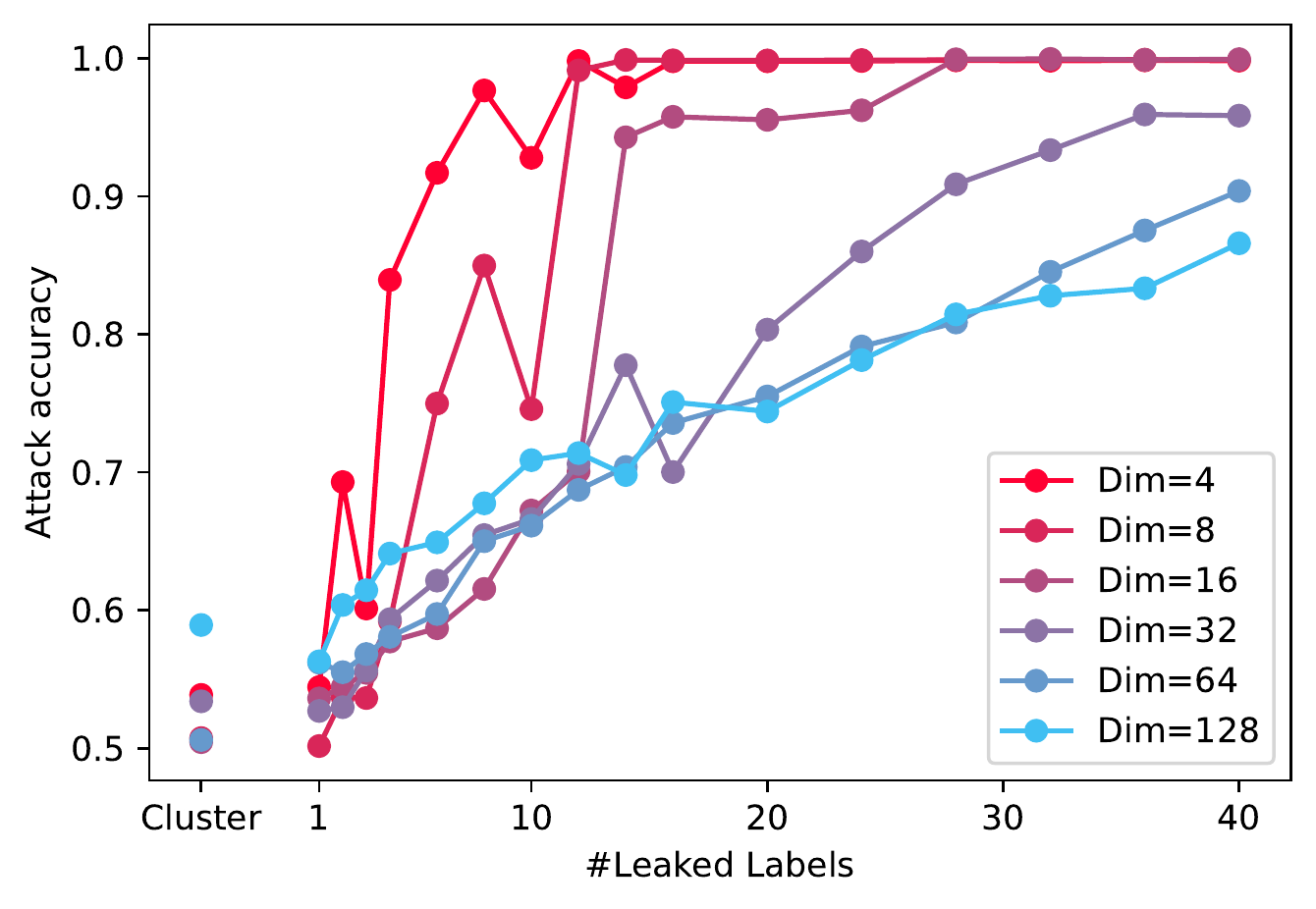}
    \caption{Test accuracy of original task vs. accuracy of clustering attack of different forward embedding dimensions on MNIST (2 class).
    Dots in the first column is the result of clustering attacks.
    The loss coefficient $\alpha = 1$.}
    \label{fig:mnist-subclass2}
\end{figure}
\subsection{Dimension of the Forward Embedding}
To study how the dimensions of the forward embedding affect the attack accuracy, we report the result of different forward embedding dimensions on the MNIST dataset with two labels (we only keep the `0' and `1' samples) in \Cref{fig:mnist-subclass2}.

We can see that a larger dimension on the forward embeddings generally makes the fine-tuning attacks harder.
For example, when the dimension is 4, the attacker can achieve near $100\%$ accuracy with 10 labeled samples in each class.
In contrast, when the dimension is 128, the attacker can only achieve an accuracy of $<90\%$ even with 40 labeled samples in each class.
This is quite intuitive, since we need more parameters to classify data of higher dimension, and naturally, more samples are needed.
However, when there are very few labeled samples (e.g., $1\sim 5$), or for clustering attacks, high dimensions seem to be beneficial to the attacker, as the attack accuracy is slightly higher than in low-dimension cases.
This counterintuitive phenomenon shall be studied in the future.

\begin{figure*}[h]
    \centering
    \includegraphics[width=\linewidth]{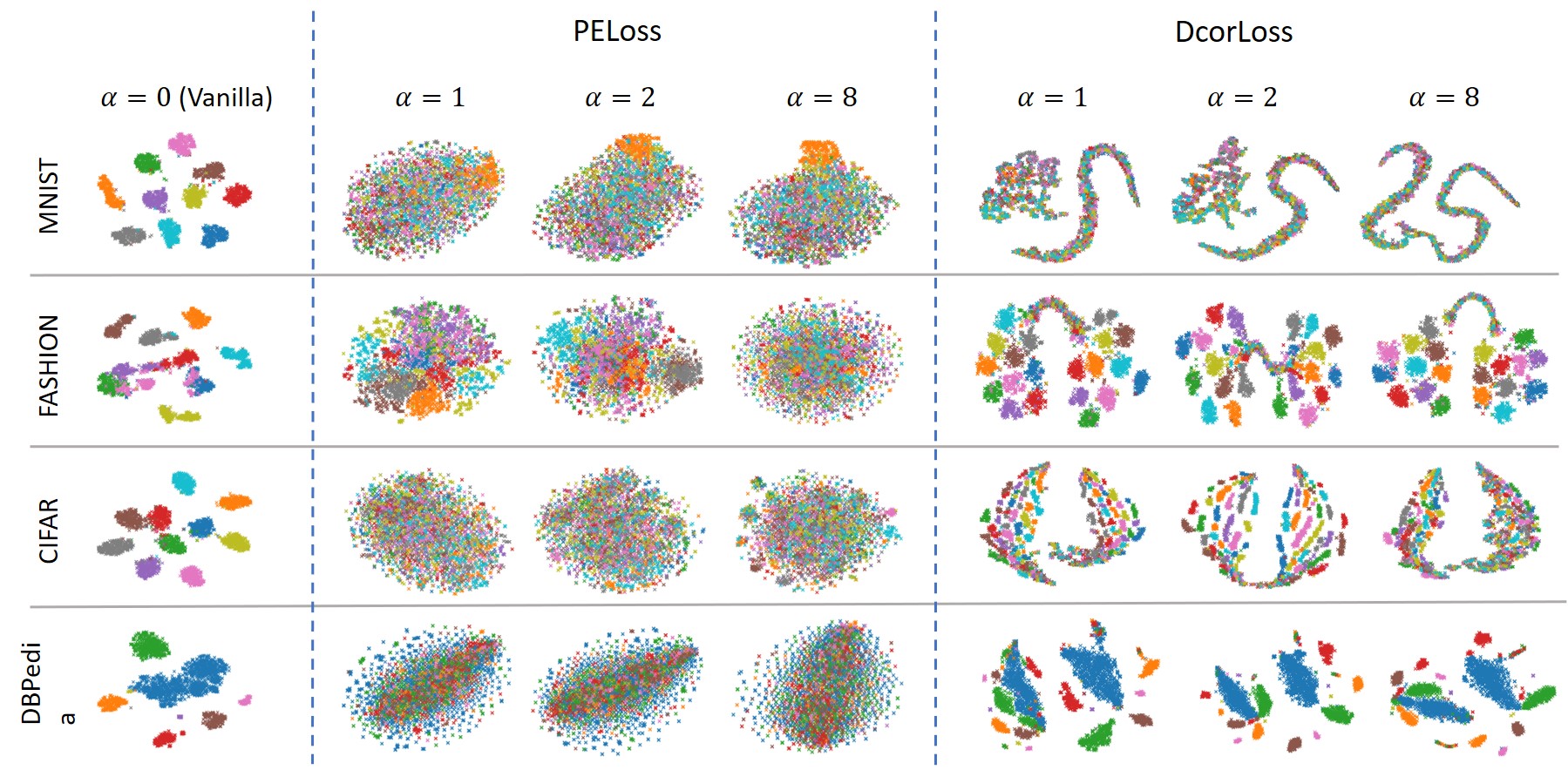}
    \caption{t-SNE of the forward embeddings.
        Different color represents the forward embedding of different classes.}
    \label{fig:tsne}
\end{figure*}
\subsection{Visualization of Forward Embedding}
To directly show the effect of potential energy loss, we visualize the bottom model output using t-SNE~\cite{van2008tsne} in \Cref{fig:tsne}.
We can see that without PELoss, the embeddings of different classes are clustered with a relatively large margin.
Adding PELoss quickly makes the embeddings of different classes all mixed together and seem to be completely random.
On the other side, DcorLoss smashes embeddings to some extent, but it seems like dividing the same-class embeddings into multiple clusters, which still preserves much information.

\end{document}